\newtheorem{theorem}{Theorem} 
\newtheorem{proposition}[theorem]{Proposition}
\newcommand{\ba}{\begin{align}}
\newcommand{\ea}{\end{align}}  
\newcommand{\be}{\begin{equation}}
\newcommand{\ee}{\end{equation}}
\newcommand{\bea}{\begin{eqnarray}}
\newcommand{\eea}{\end{eqnarray}}
\newcommand{\barr}{\begin{array}}
\newcommand{\earr}{\end{array}}
\newcommand{\bn}{\begin{enumerate}}
\newcommand{\en}{\end{enumerate}}
\newcommand{\bi}{\begin{itemize}}
\newcommand{\ei}{\end{itemize}}
\newcommand{\bbbm}{\begin{pmatrix}}
\newcommand{\eeem}{\end{pmatrix}}
\newcommand{\cP}{{\cal P}}
\newcommand{\cX}{{\cal X}}
\newcommand{\R}{{\mathbb R}}
\newcommand{\N}{{\mathbb N}}
\newcommand{\E}{\mathbb{E}}
\newcommand{\ignore}[1]{}{}
\newcommand{\nn}{\nonumber}
\newcommand{\q}{\quad}
\newcommand{{\QED}}{{\hfill QED} \smallskip}
\renewcommand{\subset}{\subseteq}
\renewcommand{\phi}{\varphi}
\newcommand{\cal}{\mathcal}
 \definecolor{darkspringgreen}{rgb}{0.09, 0.45, 0.27} 
 \definecolor{darkgray}{rgb}{0.66, 0.66, 0.66}
\numberwithin{equation}{section}
\numberwithin{theorem}{section}
\begin{document}
\title
[Multi-period vectorial martingale optimal transport]
{Replication of financial derivatives under extreme market models given marginals
} 

\thanks{The author wishes to express gratitude to the Korea Institute of Advanced Study (KIAS) AI research group and the director Hyeon, Changbong for their hospitality and support during his stay at KIAS in 2023, where parts of this work were performed.
}
\date{\today}

\author{Tongseok Lim}
\address{Tongseok Lim: Mitchell E. Daniels, Jr. School of Business \newline  Purdue University, West Lafayette, Indiana 47907, USA}
\email{lim336@purdue.edu}
\onehalfspacing

\begin{abstract}
The Black-Scholes-Merton model is a mathematical model for the dynamics of a financial market that includes derivative investment instruments, and its formula provides a theoretical price estimate of European-style options. The model's fundamental idea is to eliminate risk by hedging the option by purchasing and selling the underlying asset in a specific way, that is, to replicate the payoff of the option with a portfolio (which continuously trades the underlying) whose value at each time can be verified. One of the most crucial, yet restrictive, assumptions for this task is that the market follows a  geometric Brownian motion, which has been relaxed and generalized in various ways.

The concept of robust finance revolves around developing models that account for uncertainties and variations in financial markets. Martingale Optimal Transport, which is an adaptation of the Optimal Transport theory to the robust financial framework, is one of the most prominent directions. In this paper, we consider market models with arbitrarily many underlying assets whose values are observed over arbitrarily many time periods, and demonstrates the existence of a portfolio sub- or super-hedging a general path-dependent derivative security in terms of trading European options and underlyings, as well as the portfolio replicating the derivative payoff when the market model yields the extremal price of the derivative given marginal distributions of the underlyings. In mathematical terms, this paper resolves the question of dual attainment for the multi-period vectorial martingale optimal transport problem.
\end{abstract}

\maketitle
\noindent\emph{Keywords: Robust finance, Hedging, Martingale, Optimal transport, Duality, Dual attainment, Multi-period, Infinite-dimensional linear programming
}

\noindent\emph{MSC2010 Classification: {\rm 90Bxx, 90Cxx, 49Jxx, 49Kxx, 60Dxx, 60Gxx}}

\section{Introduction}
The Black-Scholes-Merton model, often referred to as the Black-Scholes model, is a mathematical framework for calculating the theoretical pricing of options and other derivatives \cite{bs73}. It was developed by economists Fischer Black and Myron Scholes in collaboration with mathematician Robert C. Merton in the early 1970s. 

The Black-Scholes-Merton model revolutionized the way options are priced in financial markets by providing a mathematical formula to determine the fair value of options based on various factors, and their pricing model has become a fundamental tool in options trading and valuation. The Black-Scholes-Merton model combines concepts from mathematics, statistics, and economics to provide a rigorous framework for understanding option pricing and risk management. As a result, it influenced subsequent research in financial economics and inspired the development of more sophisticated models and techniques \cite{DKSY23, DS94, DS06, FPS00, KS99}.

The idea of replication, also known as the principle of no-arbitrage, is arguably the most fundamental concept underlying the Black-Scholes-Merton model. It states that the value of an option can be recreated by establishing a risk-neutral portfolio of the underlying asset in such a way that the portfolio replicates the option's cash flows and payoffs. This idea has influenced the development of the efficient market hypothesis and has had a profound impact on the study of financial economics. 

Despite its importance, it is worth noting that the Black-Scholes-Merton model has certain limitations, such as the assumption of the market following a geometric Brownian Motion and the assumption of continuous trading. These assumptions may not hold in real-world markets, and there have been subsequent models and refinements that address some of these limitations \cite{cko, DFS03, FPSS11, GH14, Sc17}.

The concept of robust finance revolves around the development of models and approaches that account for market uncertainties. It recognizes the limitations of traditional financial models, which presume exact and accurate information about the market and its underlying assets, and instead attempts to construct frameworks that can adapt to the inherent uncertainties and risks in real-world financial systems \cite{BRS21, D16, H17, Ho11, npx23, Obloj}.

One of the most renowned and prospering directions is the theory of Martingale Optimal Transport (MOT), which is an adaptation of the Optimal Transport (OT) theory to the robust financial framework. The OT theory, also known as the theory of Monge-Kantorovich transportation, is a mathematical framework that deals with the problem of efficiently moving mass from one distribution to another. OT theory seeks to find the optimal way to transport one distribution of mass to another while minimizing the cost or distance associated with the transportation. The theory originated in the 18th century with the works of mathematicians Gaspard Monge and Leonid Kantorovich, and has since been developed and extended by various researchers. The importance of OT theory lies in its wide range of applications across various disciplines, including mathematics, statistics, economics, physics and computer science \cite{ccg16, cfg10, cghh17, cghp21, fkm11, gm, p12, Sa15, Vi09}.

The theory of Martingale Optimal Transport is an extension of OT theory that combines it with the concept of martingales from probability theory. By combining OT theory with the concept of martingales, MOT extends the applicability of optimal transportation models to dynamic and stochastic settings. It provides a valuable tool for understanding and solving transportation problems under uncertainty, with applications in various fields including finance, risk management, stochastic control, and data analysis \cite{bch, bj, cot19, ds1, GaHeTo11, GKL2, GKL3, GKP21, GKP22, gtt1, gtt2, KX22}. 

One of the most prominent applications of MOT is in mathematical finance and option pricing. It provides a framework for modeling and valuing derivative securities in the presence of uncertainty and stochastic dynamics. MOT-based approaches can be used to analyze optimal hedging strategies and calculate prices and risk measures for options and other financial derivatives \cite{BeHePe11, eglo, Ho98, HoKl12, HoNe11}.

The OT and MOT  are infinite-dimensional linear programming problems, hence the problem has a dual programming problem. In MOT, the dual problem has an important interpretation in terms of determining the best sub- or super-hedging portfolio against a derivative security payout liability. As a result, one of the most significant concerns in MOT theory has been the topic of {\em dual attainment}, which refers to whether the dual problem attains an appropriate solution \cite{blo, bnt, CKPS21, Ke84}.

While the dual attainment has been well established to be affirmative in OT, researchers discovered that the added term reflecting trading strategy appearing in the dual of MOT makes establishing the dual attainment for MOT theory far more subtle. Specifically, unlike the OT theory, the dual attainment in MOT is highly sensitive to the spatial dimension, which represents the number of assets in the financial market. Because many derivative instruments are traded in the market and their payoffs are dependent on the values of various underlying assets, it is critical to understand the dual problem and its solution in a multidimensional context, i.e., a market with multiple assets as well as derivatives depending on them. As a result, efforts have been made to comprehend the dual attainment in the higher-dimensional case \cite{d18-1, d18, dt19, GKL2, Lim23, os17}, but a thorough knowledge of this question appears to be far from its completion. In particular, to the best of the author's knowledge, all research articles on the dual attainment of  multidimensional MOT assume two future maturities. In OT, this means that there are two distributions for which we attempt to move mass efficiently from one to the other. While this is a natural and sufficiently general setting in OT, it represents a significant restriction in MOT because it indicates that the reward of the derivative security under consideration can only depend on the values of the underlyings at two future maturities. It is desirable, both theoretically and practically, to investigate the martingale transport theory of price path-dependent derivative instruments whose reward can depend on arbitrarily many assets and maturities.

The goal of this paper is to establish the dual attainment of the martingale optimal transport problem over an arbitrary number of time periods and assets. We show that there exists a portfolio sub- or super-hedging a  path-dependent derivative security in terms of trading European options and underlyings, such that the portfolio replicates the derivative payoff when the market model yields the derivative's extremal price given marginal distributions of the underlyings. We contend  this is fundamental and relevant given that the price path of many underlying assets over time frequently affects financial instruments and their payouts.

This paper is organized as follows. In Section \ref{MTintro}, we introduce the multi-period vectorial martingale optimal transport problem. In Section \ref{contribution}, we discuss duality and the dual attainment result, whose proof is then provided in Section \ref{proof}.

\section{Martingale transport problem with multi assets and periods}\label{MTintro}

Consider the asset price processes $(X_{t,i})_{t \ge 0}$  in the market indexed by $i \in [d] := \{1,2,...,d\}$ representing the $i$th underlying asset. We will not assume in this paper that the joint probability law of the underlyings, often called as the market model, is known, because we cannot determine this joint law from market information. On the other hand, using a standard reasoning by Breeden and Litzenberger \cite{bl78}, we will suppose that the market can witness the distribution of each price at each fixed maturity $t > 0$, denoted by ${\rm  Law}(X_{t, i}) \in \cP(\R)$, where $\cP(\cX)$ denotes the set of all probability distributions over $\cX$. We consider an arbitrary number of finite maturity times $0 < T_1 < T_2 < ... < T_N$, denote $X_{t,i}:= X_{T_t, i}$ and $X_{t} =(X_{t, 1},...,X_{t, d})$, and assume that $(X_t)_{t \in [N]}$ is a $\R^d$-valued martingale, which is generally assumed in the financial literature through the concept of the risk-neutral probability. Now according to the above consideration, we do not assume that the market model ${\rm Law}(X_t)_t \in \cP(\R^{Nd})$ is known, but only the $Nd$-number of marginal distributions $\mu_{t,i}:= {\rm Law}(X_{t,i}) \in \cP(\R)$ are known and fixed. This leads us to consider the {\em Vectorial Martingale Optimal Transport} (VMOT) problem: Assume the marginals $(\mu_{t,i})_{t,i}$ have finite first moment, write $\mu_t = (\mu_{t,1},...,\mu_{t,d})$, $ \mu = (\mu_1,...,\mu_N)$, and $X = (X_1,...,X_N)$. We consider the space of {\em Vectorial Martingale Transports} from $\mu_t$ to $ \mu_{t+1}$, $t=1,...,N-1$, defined as follows:
\begin{align}\label{VMT}
{\rm VMT}(\mu) := \{ \pi \in \cP(\R^{Nd}) \ | \  &\pi = {\rm Law} (X), \ \E_\pi[X_{t+1} | X_t]=X_t, \\
& {\rm Law}(X_{t,i}) = \mu_{t,i} \, \text{ for all } t \in [N],\, i \in [d] \}. \nn
\end{align}
Given a {\em cost function} $c : \R^{Nd} \to \R$, we define the VMOT problem as
\begin{align}\label{VMOT}
\max / \, {\rm minimize } \ \ \E_\pi [c(X)] \ \text{ over } \ \pi \in {\rm VMT}(\mu).
\end{align}
In finance, the function $c$ is naturally interpreted as a  derivative security whose payoff $c(X)$ is fully determined at the terminal maturity $T_N$ by the price path $X = (X_t)_t$ of the $d$-number of underlyings. In this case, $\E_\pi [c(X)]$ can be regarded as a fair price for the derivative security $c$ under the market model $\pi$. Because $\pi$ cannot be observed in the market, we must take into account all feasible models ${\rm VMT}(\mu)$ which are consistent with the marginal information $\mu = (\mu_{t,i})_{t,i}$. With this knowledge, the maximum and minimum values in \eqref{VMOT} can be interpreted as the upper and lower price bounds for the derivative security $c$, respectively.

The VMOT problem is distinguished from the (ordinary) optimal transport problems by the martingale constrant $ \E_\pi[X_{t+1} | X_t]=X_t$ for all $t=1,...,N-1$, which necessitates that every pair of marginals $\mu_{t,i}, \mu_{t+1, i}$ must be in {\em convex order}\,:
\[ \mu_{t,i} \preceq_c \mu_{t+1, i} \ \ \text{if} \  \ \mu_{t,i}(f) \le \mu_{t+1, i}(f) \ \ \text{for every convex function } f \text{ on } \R,
\]
where $\mu(f) := \int f(x) \mu(dx)$, in which case ${\rm VMT}(\mu) \neq \emptyset$ and vice versa, as shown by Strassen \cite{St65}. Thus we will assume $\mu_{t,i} \preceq_c \mu_{t+1, i}$ for all $t \le N-1$ and $i \in [d]$.

The VMOT problem belongs to the class of {\em infinite-dimensional linear programming}, hence the problem has its {\em dual programming} problem. When the primal  problem \eqref{VMOT} is a minimization problem, its dual problem is given by 
 \begin{equation}\label{dualproblem}
 \sup_{(\phi,h) \in \Psi} \mu(\phi),
 \end{equation}
where $\phi = (\phi_1,...,\phi_N)$, $\phi_t = (\phi_{t,1},...,\phi_{t, d})$, $\phi_{t,i} : \R \to \R \cup \{-\infty\}$, and $\mu(\phi) := \sum_{t,i} \mu_{t,i}(\phi_{t,i})$. Meanwhile, $h = (h_1,...,h_{N})$ with the convention $h_N \equiv 0$, $h_t = (h_{t,1},...,h_{t,d})$ where $h_{t,i} : \R^{td} \to \R$ is a function of $(X_1,...,X_t)$. Finally, $(\phi,h) \in \Psi$ means that $\phi_{t,i} \in L^1( \mu_{t,i})$, $h_{t,i}$ is bounded, and the following inequality holds:
\begin{align}
\label{ptwiseineq} 
\sum_{t=1}^N \sum_{i=1}^d  \phi_{t,i} (x_{t,i}) + h_{t,i}(x_1,...,x_t) (x_{t+1, i} - x_{t,i}) \le c(x),
\end{align}
where $x=(x_1,...,x_N) \in \R^{Nd}$, $x_t=(x_{t,1},...,x_{t,d}) \in \R^d$ represents a price path. Note that
\be\label{hedgeprice}
\mu(\phi) = \E_\pi \bigg[ \sum_{t=1}^N \sum_{i=1}^d  \phi_{t,i} (X_{t,i}) + h_{t,i}(X_1,...,X_t) (X_{t+1, i} - X_{t,i})\bigg]
\ee
for any $\pi \in {\rm VMT}(\mu)$ and $(\phi,h) \in \Psi$, since $\pi$ has marginals $\{\mu_{t,i}\}_{t,i}$ and the martingale property of $\pi$ implies $ \E_\pi[h_{t,i}(X_1,...,X_t) (X_{t+1, i} - X_{t,i})] = 0$. Now if \eqref{VMOT} is a maximization problem, its dual problem reads $ \inf_{(\phi,h) \in \Psi} \mu(\phi)$, with the inequality \eqref{ptwiseineq} reversed (and $\phi_{t,i}$ taking on their values in $\R \cup \{+\infty\}$).

The dual problem also has a significant interpretation and implication in finance. Let us assume that a financial firm is obligated to pay $c(X)$ at the terminal maturity $N$ for a derivative instrument $c$. To mitigate risk, the firm may consider purchasing or selling European options $\phi_{t,i}$ available from the market, the payoff of which is based solely on the price $X_{t,i}$ at the maturity $t$. Furthermore, the company may consider holding $h_{t,i}$ shares of the $i$th asset between the $t$ and $t+1$ maturities, so that its return at time $t+1$ is $h_{t,i}(X_1,...,X_t) \cdot (X_{t+1} -X_t)$. It is worth noting that $h_{t,i}$ is a function of all underlying prices up to time $t$. Then the left hand side of \eqref{ptwiseineq} represents the payout of the hedging portfolio $(\phi, h)$, and the inequality \eqref{ptwiseineq} mandates that the position must subhedge the liability for all possible market realizations $x \in \R^{Nd}$. Having stated that, notice  the dual of the maximization problem in \eqref{VMOT} represents an optimal superhedging problem.

\section{Duality and our contribution}\label{contribution}

The celebrated {\em duality} result asserts, under a mild assumption on $c$ and the marginals, that the primal and dual optimal values coincide (see e.g. \cite{eglo, Z}):
\begin{align}\label{duality}
P(c):&=\inf_{\pi \in {\rm VMT}(\mu)} \E_\pi [c(X)] \\
&= \sup_{(\phi, h) \in \Psi} \mu(\phi)=:D(c). \nn
\end{align}
In addition, the primal problem is known to be attained; there exist an optimizer (VMOT) $\pi$ that yields $\E_\pi [c(X)] = P(c)$. Observe the VMOTs describe extreme market movements in the sense of maximizing or minimizing the fair price of $c$.

Unlike the primal problem, whose attainment can be easily established by standard arguments under a mild condition on $c$,\footnote{This does not imply that the primal optimizers are easy to understand, describe or characterize.} establishing {\em dual attainment}, i.e., proving the existence of a suitable form of dual optimizers, turns out to be a very nontrivial problem, as demonstrated by Brenier's work \cite{br} on optimal transport problems. The situation is worse for the martingale optimal transport problems, where the martingale constraint makes the dual attainment problem even more complex, as shown by \cite{bj, blo, bnt} even for a single asset setup ($d=1$). Interestingly, these works show that the dual attainment problem frequently boils down to establishing convergence of a certain sequence of convex potentials.

The difficulty of the dual attainment problem is problematic not only from a mathematical standpoint, but also from a financial standpoint, because dual optimizers describe how to (sub-/super-)hedge a particular derivative investment most effectively, and furthermore, they often provide the most critical information on the structure of primal optimizers that describe the extreme market models. Due to its significance, numerous literature addresses the dual attainment, the majority of which focus on a single asset $d=1$ and two period setup $N=2$ with a few exceptions, such as  \cite{cot19, nst20, os} which investigated duality in a multi-period with a single asset setup, while \cite{d18-1, d18, dt19, GKL2, Lim23, os17} investigated the structure of vector-valued martingale transports in a two-period setup. The purpose of this paper is to establish the dual attainment of the  martingale optimal transport problem over an arbitrary number of time periods and assets, which is fundamental and relevant given that financial instruments and their payouts frequently depend on the price path of many underlyings over time. Specifically, the goal of this paper is to prove the following result (see Section \ref{proof} for irreducibility.)

\begin{theorem} \label{main}
Let $(\mu_{t,i})_{t \in [N]}$ be an irreducible sequence of marginals on $\R$ for each $i \in [d]$. Let $c(x)$ be a lower semi-continuous cost satisfying $|c(x)| \le \sum_{t,i} v_{t,i}(x_{t,i})$ for some continuous functions $v_{t,i} \in L^1(\mu_{t,i})$. Then there exists a {\em dual optimizer}, that is a pair of function sequences $(\phi, h) = (\phi_{t,i}, h_{t,i})_{t \in [N], i \in [d]}$ that satisfies \eqref{ptwiseineq} tightly in the following pathwise sense (but needs not be in $\Psi$): 
\begin{align}
 \label{ptwiseeq}
 \sum_{t=1}^N \sum_{i=1}^d  \phi_{t,i} (x_{t,i}) + h_{t,i}(x_1,...,x_t) (x_{t+1, i} - x_{t,i}) = c(x) \q \pi-a.s.
\end{align}
for every {\rm VMOT} $\pi$ which solves the minimization problem in \eqref{VMOT}.
\end{theorem}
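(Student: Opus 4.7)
I would prove Theorem~\ref{main} by induction on the number $N$ of maturities, taking the two-period vectorial dual attainment result of \cite{Lim23} as the base case. Throughout, fix any primal optimizer $\pi^\ast$ and disintegrate it as
$\pi^\ast(dx) = \mu_1(dx_1)\,K_1(x_1,dx_2)\cdots K_{N-1}(x_1,\ldots,x_{N-1},dx_N)$,
where the martingale property forces each kernel $K_t(x_1,\ldots,x_t,\cdot)$ to have barycenter $x_t$. The dual functional $\mu(\phi)$ is additive over periods, so the strategy is to peel off the final period via the base case and handle the remaining $(N-1)$ periods via the inductive hypothesis.

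For the inductive step, I would introduce the reduced cost
\[
\tilde c(x_1,\ldots,x_{N-1}) := \int c(x_1,\ldots,x_{N-1},x_N)\,K_{N-1}(x_1,\ldots,x_{N-1},dx_N),
\]
and argue that (i) the push-forward of $\pi^\ast$ onto $\R^{(N-1)d}$ solves the $(N-1)$-period \VMOT problem with cost $\tilde c$ and marginals $(\mu_{t,i})_{t\le N-1}$, and (ii) $\tilde c$ inherits a suitable integrability bound. Claim (i) follows because otherwise the first $N{-}1$ kernels could be replaced by a cheaper martingale coupling while keeping $K_{N-1}$, contradicting the optimality of $\pi^\ast$. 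The inductive hypothesis then yields $(\phi_{t,i},h_{t,i})_{t\le N-1,\,i\le d}$ realizing the pathwise equality for $\tilde c$ on a $\pi^\ast$-conull set. On that set, the base case applied to the conditional one-period martingale transport from $\delta_{x_{N-1}}$ to $K_{N-1}(x_1,\ldots,x_{N-1},\cdot)$ with cost $c(x_1,\ldots,x_{N-1},\cdot)$ produces fiberwise potentials $\phi^{(x_1,\ldots,x_{N-1})}_{N,i}(x_{N,i})$ and hedges $h^{(x_1,\ldots,x_{N-1})}_{N-1,i}$ whose sum reproduces $c$ on each fiber.

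The hard part will be the gluing step: the fiberwise potentials must be converted into a single function $\phi_{N,i}(x_{N,i})$ that is independent of the path parameter $(x_1,\ldots,x_{N-1})$. This is precisely where the irreducibility hypothesis becomes indispensable, since within each irreducible component of a one-period vectorial MOT the dual optimizer is unique up to an affine function of the martingale displacement $x_N - x_{N-1}$, as exploited in \cite{Lim23}. Consequently the parameter dependence of $\phi^{(x_1,\ldots,x_{N-1})}_{N,i}$ can be absorbed into a modified slope $h_{N-1,i}(x_1,\ldots,x_{N-1})$, leaving $\phi_{N,i}$ as a bona fide function of $x_{N,i}$ alone. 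A measurable selection argument in the spirit of Beiglb\"ock--Nutz--Touzi, together with the freedom to set $\phi_{t,i}=-\infty$ outside the relevant irreducible component, delivers the required joint measurability. Finally, the pair $(\phi,h)$ so constructed satisfies \eqref{ptwiseeq} $\pi$-a.s.\ for \emph{every} primal optimizer $\pi$, not just $\pi^\ast$: the gap $c - \sum_{t,i}(\phi_{t,i}(x_{t,i}) + h_{t,i}(x_1,\ldots,x_t)(x_{t+1,i}-x_{t,i}))$ is pointwise nonnegative by \eqref{ptwiseineq}, while \eqref{hedgeprice} together with the duality \eqref{duality} forces its expectation to vanish under every primal optimizer, so the gap is zero $\pi$-a.s. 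The two technical linchpins are therefore verifying (ii) --- likely by refining the bound on $c$ and restricting to full-measure sets --- and implementing the measurable selection that preserves the correct separable functional form of the potentials.
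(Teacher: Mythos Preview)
Your inductive scheme has a genuine gap at the gluing step, and it is more serious than a measurable-selection technicality. The fiberwise one-period problem you propose has first marginal $\delta_{x_{N-1}}$ and second marginal $K_{N-1}(\bar x_{N-1},\cdot)$; these are \emph{not} the given marginals $(\mu_{N-1,i})_i$ and $(\mu_{N,i})_i$, so the base case of \cite{Lim23} does not apply as stated (irreducibility of the given pairs says nothing about irreducibility of $\delta_{x_{N-1,i}}\preceq_c$ the $i$-th marginal of $K_{N-1}(\bar x_{N-1},\cdot)$). More fundamentally, even granting fiberwise dual optimizers, the uniqueness-up-to-affine argument does not force $\phi_{N,i}^{(\bar x_{N-1})}$ to be the \emph{same} function across fibers: uniqueness up to affine holds \emph{within a fixed problem}, whereas here each fiber $\bar x_{N-1}$ poses a \emph{different} one-period problem, with a different cost $c(\bar x_{N-1},\cdot)$ and different second marginal. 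There is nothing linking the non-affine parts of the fiberwise potentials, so in general the $\bar x_{N-1}$-dependence cannot be absorbed into a slope $h_{N-1}(\bar x_{N-1})$. This is exactly the obstruction that makes the multi-period problem hard: the dual variables $\phi_{N,i}$ must be functions of $x_{N,i}$ alone, and no local/fiberwise construction enforces this.

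A second gap is that your construction never establishes the global inequality \eqref{ptwiseineq}. You obtain equality on fibers of a fixed $\pi^\ast$, but your final paragraph \emph{assumes} \eqref{ptwiseineq} in order to conclude equality for every optimizer; without the inequality holding for all $x\in\R^{Nd}$, the argument is circular. Relatedly, $\tilde c$ depends on $K_{N-1}$ and hence on the particular optimizer $\pi^\ast$, and there is no reason it is lower semi-continuous, so the inductive hypothesis may not even apply. The paper avoids all of this by working directly with an approximating dual sequence $(\phi_n,h_n)$ that already has the correct separable form; it controls the convex cores $\chi_{t,n}(x_t):=\sup_{x_1,\ldots,x_{t-1}}\sum_{s<t}(\phi^\oplus_{s,n}+h_{s,n}\cdot\Delta x_s)$ via Proposition~\ref{conv}, extracts a.s.\ convergent Ces\`aro subsequences of $\phi_{t,i,n}$ by a Koml\'os argument, and then constructs $h_t$ a posteriori from convex envelopes. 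The key is that compactness is obtained at the level of the convex functions $\chi_{t,n}$, which automatically respect the separable structure of $\phi$.
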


The $(\phi, h)$ presented in the theorem is known as a {\em dual maximizer} since it is a solution concept to the problem \eqref{dualproblem}, which is dual to the primal minimization problem. The equality \eqref{ptwiseeq} indicates that the portfolio composed of a dual maximizer replicates the derivative security $c$ by yielding the same payout for all feasible price paths $x$  given by any VMOT $\pi$ minimizing the price $\E_\pi[c(X)]$. And \eqref{ptwiseineq} shows the portfolio otherwise subhedges the derivative $c$  for all price paths. Theorem \ref{main} also yields the existence of a dual minimizer such that the portfolio built of a dual minimizer superhedges an upper semi-continuous $c$ by reversing the inequality in \eqref{ptwiseineq}, and it replicates $c$ with respect to the primal maximizers.

We emphasize that a dual optimizer does not necessarily belong to $\Psi$. Studies showed the dual problem \eqref{dualproblem} is generally not attained within the class $\Psi$ even when $(d,N) = (1,2)$ (see \cite{bj, bnt}), unless $c$ satisfies a specific regularity property \cite{blo}. Although $\Psi$ can be seen as a natural domain for the dual problem, it is rather ``narrow" as it lacks suitable compactness due to its infinite dimensionality. This implies that establishing dual attainment is substantially more involved than establishing duality \eqref{duality}, which can usually be derived via standard argument in functional analysis and variational calculus. In summary, a dual optimizer $(\phi, h)$ needs not be in $\Psi$, but it does hold that $\phi_{t,i}$ is real-valued $\mu_{t,i}$-almost surely, and that $h_{t,i}$ is also real valued. Since the marginal distributions $\{\mu_{t,i}\}_{t,i}$ are assumed in the VMOT problem, all of the functions in Theorem \ref{main} are essentially real-valued (and measurable), while no further regularity is imposed a priori.

\section{Proof of Theorem \ref{main}}\label{proof}
When $N=2$, \cite{Lim23} proved Theorem \ref{main}, and one of the most essential ingredients for the proof is the stability of the {\em convex cores} $\{\chi_n\}_n$, which was earlier recognized and proved when $d=1$ in \cite{bnt}. To describe, we need to explain the concept of {\em irreducibility} of probability distributions $\mu \preceq_c \nu$ on $\R$ in convex order. 

Two probabilities (with finite first moment) in convex order 
        $\mu \preceq_{c} \nu$ is called irreducible if $I :=\{x \in \R \, | \, u_{\mu}(x) < u_{\nu}(x)\}$ is a connected interval and $\mu(I)=\mu(\R)$, where $u_{\mu}(x):= \int_\R |x-y| d\mu(y)$ is called the potential function of $\mu$. Note that $I$ is open since $u_{\mu}$ is continuous. In this case, $(I,J)$ is called the domain of $(\mu,\nu)$ where $J$ is the smallest interval satisfying $\nu(J) = \nu(\R)$, that is, $J$ is the union of~$I$ and any endpoints of $I$ that are atoms of $\nu$. Thus in particular, it holds $I={\rm int}(J)$, where ${\rm int}(A)$ is the interior of $A$ and ${\rm conv}(A)$ is the convex hull of $A$. Note that $J$ can be of the form $(a,b]$, $[a,b)$, $(a,b)$ or $[a,b]$; in the first case it holds $\nu(b) > 0$, and in the second $\nu(a) > 0$. And in all cases $I = {\rm int}(J) = (a,b)$. Of course, $I$ and $J$ can be (half)-infinite intervals. 
Roughly speaking, the irreducibility of $\mu \preceq_c \nu$ means that $\nu$ is regularly dispersed from $\mu$. We underline that irreducibility is a natural and generic property that practically any pair of probability distributions on $\R$ in convex order fulfills, and that even if a pair is not irreducible, it can be perturbed arbitrarily small to become irreducible.

Now for irreducible pairs $(\mu_i,\nu_i)_{i \in [d]}$ with domains $(I_i, J_i)$, set $I = I_1 \times ... \times I_d$, $J = J_1 \times ... \times  J_d$ and  $\mu^\otimes = \mu_1 \otimes ... \otimes \mu_d$,  $\nu^\otimes = \nu_1 \otimes ... \otimes \nu_d$. $I$ is an open rectangle in $\R^d$, $J = {\rm int} (I)$, and $\mu^\otimes, \nu^\otimes   \in \cP(\R^d)$ are the product measures of $\mu_i$'s and $\nu_i$'s respectively. Now the following was shown in \cite{bnt} for $d=1$ and in \cite{Lim23} for $d \ge 2$.

\begin{proposition}\label{conv} Let $(\mu_i,\nu_i)_{i \in [d]}$ be  irreducible pairs of probability distributions on $\R$ with domains $(I_i,J_i)_{i \in [d]}$. Let $a \in I$, $C \in \R$. Consider the following class of functions $\Lambda=\Lambda(a,C,\vec\mu,\vec\nu)$ where every $\chi \in \Lambda$ satisfies the following:
\begin{enumerate}
\item $\chi$ is a real-valued convex function on $J$,
\item $\chi \ge 0$ and $\chi(a)=0$,
\item $\int \chi \,d(\nu^\otimes - \mu^\otimes) \le C$.
\end{enumerate}
Then $\Lambda$ is locally bounded in the following sense: for each compact subset $K$ of $J$, there exists $M=M(K)$ such that $\chi \le M$ on $K$ for every $\chi \in \Lambda$. Furthermore, for any sequence $\{\chi_n\}_n$ in $\Lambda$, there exists a subsequence $\{ \chi_{n_j} \}_j$ of $\{ \chi_n\}_n$ and a real-valued convex function $\chi$ on $J$ such that $\lim_{j \to \infty} \chi_{n_j} (x) = \chi(x)$ for all $x \in J$.
\end{proposition}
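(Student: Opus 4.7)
The plan is to split the statement into two parts: (i) \emph{local boundedness} of the family $\Lambda$, i.e., producing for every compact $K \subset J$ a constant $M(K)$ with $\chi \le M(K)$ on $K$ for all $\chi \in \Lambda$; and (ii) sequential pointwise convergence along a subsequence. Part (ii) is standard once (i) is in hand: convex functions that are uniformly locally bounded on the open convex set $I = \mathrm{int}(J)$ are uniformly locally Lipschitz, so Arzel\`a--Ascoli combined with diagonal extraction over an exhaustion $I = \bigcup_m K_m$ by compacta yields a subsequence converging locally uniformly on $I$ to a convex function $\chi$, and convexity is preserved in the pointwise limit. Extending the convergence to the possibly atomic boundary points of $J \setminus I$ uses monotonicity of one-sided limits of convex functions along any coordinate direction and the uniform bound from (i).

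For (i) I would argue by induction on $d$. The base case $d=1$ is the result of \cite{bnt}, whose heart is the integration-by-parts identity
\[
2\int \chi\,d(\nu-\mu) \;=\; \int (u_\nu - u_\mu)\,d\chi''
\]
for convex $\chi$ on $J$, where $\chi''$ denotes the non-negative Radon measure given by the distributional second derivative. Since $(\mu,\nu)$ is irreducible with domain $(I,J)$, the continuous function $u_\nu - u_\mu$ admits a strictly positive lower bound $\varepsilon_K$ on each compact $K \subset I$, so condition (3) forces $\chi''(K) \le 2C/\varepsilon_K$ uniformly over $\chi \in \Lambda$; combined with (2), this gives the uniform pointwise bound on $K$, extended to compacts of $J$ by a one-sided convexity argument at atoms on $\partial I$. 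For the induction step I would exploit the telescoping decomposition
\[
\nu^{\otimes} - \mu^{\otimes} \;=\; \sum_{i=1}^d \bigl(\bigotimes_{j<i} \nu_j\bigr) \otimes (\nu_i - \mu_i) \otimes \bigl(\bigotimes_{j>i} \mu_j\bigr).
\]
Each one-dimensional section of a convex $\chi$ is convex and each $(\mu_i,\nu_i)$ is in convex order, so every summand integrates $\chi$ non-negatively, and condition (3) therefore bounds every summand by $C$. Plugging the one-dimensional identity inside each summand and using Fubini then yields uniform integral bounds on the directional second-derivative measures $\chi''_{e_i}$ against transverse product measures restricted to compact rectangles, with $\chi(a)=0$ and $\chi\ge 0$ playing the role of the base-point normalization simultaneously in every coordinate.

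The main obstacle is the final step of the induction: upgrading the directional integral bounds just described into a genuine uniform pointwise bound on a compact rectangle $K = K_1 \times \cdots \times K_d \subset J$. In one dimension the base-point condition $\chi(a)=0$ together with a measure bound on $\chi''$ instantly propagates to a pointwise bound; in dimension $d \ge 2$, however, the bound on $\chi''_{e_i}$ is only available after averaging over the transverse coordinates, and the condition $\chi(a)=0$ pins $\chi$ only along a single axis-parallel trajectory through $a$. Handling this will require a careful averaging-and-extrapolation argument that exploits convexity of $\chi$ in \emph{all} variables simultaneously, so that bounds obtained on a positive transverse-product-measure set of "good" slices can be extended to nearby slices by convex interpolation, together with the product structure of $\mu^\otimes$ and $\nu^\otimes$. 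I expect this step to essentially recover the content of the $d\ge 2$ argument of \cite{Lim23}.
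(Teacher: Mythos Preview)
The paper does not supply its own proof of this proposition: it is quoted as an input, established in \cite{bnt} for $d=1$ and in \cite{Lim23} for $d\ge 2$. So there is no in-paper argument to compare against, only the cited sources.

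Your outline is faithful to those sources at the level you commit to. The $d=1$ case via the potential identity $2\int \chi\,d(\nu-\mu)=\int (u_\nu-u_\mu)\,d\chi''$ and the positivity of $u_\nu-u_\mu$ on compacta of $I$ is exactly the mechanism in \cite{bnt}, and the telescoping decomposition of $\nu^\otimes-\mu^\otimes$ together with the observation that each summand is nonnegative on convex $\chi$ is a correct first move toward the higher-dimensional case. Your part (ii) is also standard and correct once (i) is in place.

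That said, your proposal is explicitly incomplete at the one place that carries all the weight: you yourself flag that passing from transverse-averaged bounds on $\chi''_{e_i}$ to a uniform pointwise bound on a compact rectangle is the ``main obstacle,'' and you do not propose a concrete mechanism beyond saying it should ``essentially recover the content of the $d\ge 2$ argument of \cite{Lim23}.'' That is not a gap in the sense of a wrong idea, but it is the entire content of the $d\ge 2$ result; the induction-on-$d$ framing and telescoping identity are the easy part. If you intend this as a proof rather than a reading plan, you need to actually carry out the averaging-and-extrapolation step: show how a bound that holds only on a set of slices of positive transverse product measure, combined with joint convexity and the single normalization $\chi(a)=0$, propagates to all slices in a compact rectangle. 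Until that is written down, what you have is a correct reduction to the known hard lemma, not a proof of it.
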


In Theorem \ref{main}, there are $Nd$ number of marginal distributions $\mu_{t,i}$, indexed by $t \in [N]$ and $i \in [d]$, where $t$ represents the time period and $i$ is the martingale (or financial asset) index. Let $(I_{t,i}, J_{t,i})$ denote the domain of $\mu_{t,i} \preceq_c \mu_{t+1,i}$, $t=1,...,N-1$, with the convention $J_{0,i} := J_{1,i}$ and $I_{N,i} := J_{N-1,i}$. Throughout the proof, keep in mind that the bounding constant $C$ does not depend on $n$.

\begin{proof}[Proof of Theorem \ref{main}]
{\bf Step 1.}  The assumption  $|c(x)| \le \sum_{t, i} v_{t,i}(x_{t,i})$ for continuous $v_{t,i} \in L^1(\mu_{t,i})$ ensures $P(c)=D(c)$ in \eqref{duality} (see e.g. \cite{Z}). Clearly, a dual optimizer exists for $c(x)$ iff so does for $\tilde c (x):= c(x) - \sum_{t,i} v_{t,i}$. Thus by replacing $c$ with $\tilde c$, from now on we will assume that $c \le 0$.

As $P(c)=D(c) \in \R$, we can find an {\em approximating dual maximizer} $(\phi_n, h_n)_{n \in \N}$ which consists of real-valued continuous functions $\phi_{t,i,n} \in L^1(\mu_{t,i})$ and continuous bounded $h_{t,i,n}$ for every $t \in [N]$, $i \in [d]$ and $n \in \N$ (we assume $h_{N,i,n} \equiv 0$),  such that the following duality holds:
\begin{align}
\label{dual} & \sum_{t=1}^N \sum_{i=1}^d  \phi_{t,i,n} (x_{t,i}) + h_{t,i}(\bar x_{t}) \Delta x_{t,i} \le c(x) \le 0,\\
\label{maximizing} &\mu(\phi_n):=  \sum_{t=1}^N \sum_{i=1}^d  \int \phi_{t,i,n} (x_{t,i})\,d\mu_{t,i}(x_{t,i})
 \nearrow P(c)
\ \ \text{as} \ \ n \to \infty,
\end{align}
where $\bar x_{t} = (x_{1},...,x_{t})$, $\Delta x_{t,i} = x_{t+1, i} - x_{t,i}$ and $\Delta x_{t} = x_{t+1} - x_{t}$. Denote $\phi_{t,n}^\oplus (x_t) =\sum_{i=1}^d  \phi_{t,i,n} (x_{t,i})$,  $h_{t,n}(\bar x_t) = \big{(} h_{t,1,n}(\bar x_t),..., h_{t,d,n}(\bar x_t)\big{)}$. Define
\begin{align}
\label{chidefinition}
\chi_{t,n}(x_{t}):= \sup_{x_1,...,x_{t-1}} \sum_{s=1}^{t-1} \big( \phi_{s,n}^\oplus (x_s) + h_{s,n}(\bar x_s) \cdot \Delta x_s \big)
\end{align}
with the convention $\chi_{1,n} = \chi_{N+1, n} \equiv 0$. Notice $\chi_{t,n}$ is a convex function on $\R^d$, since it is a supremum of affine functions of $x_t$. We now show
\begin{align}\label{chiineq}
\chi_{t,n} \le \chi_{t+1,n} - \phi_{t,n}^\oplus \ \text{ for all } t \in [N] \text{ and } n \in \N.
\end{align}
This inequality can be shown as follows:
\begin{align*}
\chi_{t+1,n}(x_{t+1}) &= \sup_{x_1,...,x_{t}} \sum_{s=1}^{t} \big( \phi_{s,n}^\oplus (x_s) + h_{s,n}(\bar x_s) \cdot \Delta x_s \big) \\
& \ge \sup_{\substack{x_1,...,x_{t-1} \\ x_{t} = x_{t+1}}} \sum_{s=1}^{t-1} \big( \phi_{s,n}^\oplus (x_s) + h_{s,n}(\bar x_s) \cdot \Delta x_s \big) +  \phi_{t,n}^\oplus (x_{t+1})\\
&= \chi_{t,n}(x_{t+1}) + \phi_{t,n}^\oplus (x_{t+1}).
\end{align*}
We can now establish the following crucial bound
\begin{align}\label{intbypart}
\int \chi_{t, n} \,d(\mu_t^\otimes - \mu_{t-1}^\otimes) \le C \ \text{ for all } t=2,...,N \text{ and } n \in \N.
\end{align}
To see this, by repeated application of \eqref{chiineq}, we have
\begin{align*}
\mu_t^\otimes(\chi_{t,n}) &\le \mu_t^\otimes(\chi_{t+1,n}) - \mu_t^\otimes(\phi_{t,n}^\oplus) \\
&\le \mu_{t+1}^\otimes(\chi_{t+1,n}) - \mu_t^\otimes(\phi_{t,n}^\oplus) \\
&\le \mu_{t+1}^\otimes(\chi_{t+2,n}) - \mu_{t+1}^\otimes (\phi_{t+1,n}^\oplus) - \mu_t^\otimes(\phi_{t,n}^\oplus) \\
&\le \dots \le - \sum_{s=t}^N \mu_s^\otimes (\phi_{s,n}^\oplus), 
\end{align*}
where the second inequality is due to $\mu_t^\otimes \preceq_c \mu_{t+1}^\otimes$ and convexity of $\chi$. Similarly, 
\begin{align*}
\mu_{t-1}^\otimes(\chi_{t,n}) &\ge \mu_{t-1}^\otimes(\chi_{t-1,n}) + \mu_{t-1}^\otimes(\phi_{t-1,n}^\oplus) \\
&\ge \mu_{t-2}^\otimes(\chi_{t-1,n}) + \mu_{t-1}^\otimes(\phi_{t-1,n}^\oplus) \\
&\ge \mu_{t-2}^\otimes(\chi_{t-2,n}) + \mu_{t-2}^\otimes(\phi_{t-2,n}^\oplus) + \mu_{t-1}^\otimes(\phi_{t-1,n}^\oplus) \\
&\ge \dots \ge  \sum_{s=1}^{t-1} \mu_s^\otimes (\phi_{s,n}^\oplus).
\end{align*}
The two inequalities combine to give
\begin{align}\label{chiineq2}
\int \chi_{t, n} \,d(\mu_t^\otimes - \mu_{t-1}^\otimes) \le  - \sum_{s=1}^{N} \mu_s^\otimes (\phi_{s,n}^\oplus) = - \mu (\phi_n)
\end{align}
which, in conjunction with \eqref{maximizing},  yields \eqref{intbypart}.

From \eqref{intbypart}, we can obtain local uniform boundedness of $\{\chi_{t,n}\}_n$ via Proposition \ref{conv}. We need to meet the proposition's second condition. For this, fix any $a \in I_1$, and let $L_{2,n}$ be an affine functions satisfying $L_{2,n} \le \chi_{2,n}$ and $L_{2,n}(a) = \chi_{2,n}(a)$. By linearity, $L_{2,n}(x_2) = \nabla L_{2,n}(x_1) \cdot (x_2 - x_1) + L_{2,n}(x_1)$, which allows us to modify \eqref{dual} by replacing $\phi_{1,n}^\oplus(x_1)$ with $\phi_{1,n}^\oplus(x_1)- L_{2,n}(x_1)$, $\phi_{2,n}^\oplus(x_2)$ with $\phi_{2,n}^\oplus(x_2) + L_{2,n}(x_2)$, and $h_{1,n}(x_1)$ with $h_{1,n}(x_1) - \nabla L_{2,n}(x_1)$ ($\nabla L_{2,n}$ is constant and does not depend on $x_1$). Notice this yields $\chi_{2,n} \ge 0$  and $\chi_{2,n}(a) =0$. We can continue subtracting appropriate linear functions $L_{t,n}$, $t=2,...,N$, and obtain
\begin{align}\label{convex0}
\chi_{t,n} \ge 0 \ \text{ and } \ \chi_{t,n}(a) =0 \quad \text{for all } n \in \N \text{ and } t \in [N].
\end{align}
Note that the modifications have no effect on the value $\mu(\phi_n)$.

To quantitatively state the local bound of $\chi$, let $\{\epsilon_k\}_k$ be a positive decreasing sequence tending to zero as $k \to \infty$, and write $I_{t,i} = ]a_{t,i}, b_{t,i}[$ where $-\infty \le a_{t,i} < b_{t,i} \le +\infty$. Then we define the compact interval $J_{t,i,k}:=[c_{t,i,k}, d_{t,i,k}]$ for $t \in [N-1]$, $i \in [d]$ and $k \in \N$ as follows:
\begin{align*}
&\text{If } a_{t,i} > -\infty, \text{ then define $c_{t,i,k }$ by:}\\
&\ \mu_{t+1,i} (a_{t,i})=0 \Rightarrow c_{t,i,k }:= a_{t,i} + \epsilon_k; \ \ \mu_{t+1,i} (a_{t,i}) > 0 \Rightarrow c_{t,i,k} := a_{t,i}; \\
&\text{If } b_{t,i} < +\infty, \text{ then define $d_{t,i,k }$ by:}\\
&\ \mu_{t+1,i} (b_{t,i})=0 \Rightarrow d_{t,i,k }:= b_{t,i} -  \epsilon_k; \ \ \mu_{t+1,i} (b_{t,i})>0 \Rightarrow d_{t,i,k }:= b_{t,i};  \\
&\text{If }  a_{t,i} = -\infty, \text{ then } c_{t,i,k} := -1/ \epsilon_k; \\
&\text{If }  b_{t,i} = +\infty, \text{ then }  d_{t,i,k} := +1/\epsilon_k.
\end{align*}
Set $J_{0,i,k} := J_{1,i,k} $. For example, if $\mu_{t+1,i} (a_{t,i}) = 0$ and $\mu_{t+1,i} (b_{t,i}) > 0$, then $J_{t,i,k} = [a_{t,i} + \epsilon_k, b_{t,i}]$. Let $\epsilon_1$ be so small so that $\mu_{t,i} (J_{t,i,1})>0$, $\mu_{t+1,i} (J_{t,i,1}) > 0$ for every $t,i$. Observe that $J_{t,i,k} \nearrow  J_{t,i}$ as $k \to \infty$. Let $J_{t,k} := J_{t,1,k} \times J_{t,2,k} \times ... \times J_{t,d,k}$. 
Then by Proposition \ref{conv}, we deduce there exists  $M_{k} \ge 0$ for each $k \in \N$ such that 
\begin{align}\label{boundconvex}
0 \le \sup_n \chi_{t,n} \le M_{k} \ \ \text{in} \ \, J_{t-1,k}.
\end{align}

{\bf Step 2.} Given an approximating dual maximizer $(\phi_n, h_n)_{n \in \N}$ (where we recall that each of $\phi_n = (\phi_{t,i,n})_{t,i}$ and $h_n = (h_{t,i,n})_{t,i}$ is in $\R^{Nd}$), our goal is to show pointwise convergence of $\phi_{t,i,n}$ to some function $\phi_{t,i}$ $\mu_{t,i}$-a.s. as $n \to \infty$, where $\phi_{t,i} \in \R \cup \{-\infty\}$ is $\mu_{t,i}$-a.s. finite.
But as noted in \cite{Lim23}, there is an obstacle for the convergence when $d \ge 2$, that is, in the duality relation \eqref{dual} one can always replace $(\phi_{t,i,n})_{t,i}$ by $(\phi_{t,i,n}+ C_{t,i,n})_{t,i} $ for any constants $C_{t,i,n} $ satisfying 
  $\sum_{t,i} C_{t,i,n} = 0$. This implies that the convergence cannot hold for any approximating dual maximizer. In view of this, our goal is to show that there exists a suitable approximating dual maximizer that yields the almost sure convergence. 

To this end, take any approximating dual maximizer $(\phi_n, h_n)_{n \in \N}$ satisfying \eqref{dual}, \eqref{maximizing}. By repeatedly applying \eqref{chiineq}, we deduce
\begin{align*}
C &\ge  -\sum_{s=1}^{N} \mu_s^\otimes (\phi_{s,n}^\oplus)\\
&\ge  \mu_N^\otimes (\chi_{N,n}) -\sum_{s=1}^{N-1} \mu_s^\otimes (\phi_{s,n}^\oplus) \\
&\ge  \mu_{N-1}^\otimes (\chi_{N,n}) -\sum_{s=1}^{N-1} \mu_s^\otimes (\phi_{s,n}^\oplus) \\
&=\,  \mu_{N-1}^\otimes (\chi_{N,n}) -  \mu_{N-1}^\otimes (\chi_{N-1,n}) +  \mu_{N-1}^\otimes (\chi_{N-1,n}) -\sum_{s=1}^{N-1} \mu_s^\otimes (\phi_{s,n}^\oplus) \\
&=\,  \parallel \chi_{N,n} -\chi_{N-1,n} - \phi_{N-1,n}^\oplus \parallel_{L^1( \mu_{N-1}^\otimes )}  +  \mu_{N-1}^\otimes (\chi_{N-1,n}) -\sum_{s=1}^{N-2} \mu_s^\otimes (\phi_{s,n}^\oplus) \\
&\ge\,  \parallel \chi_{N,n} -\chi_{N-1,n} - \phi_{N-1,n}^\oplus \parallel_{L^1( \mu_{N-1}^\otimes )}  +  \mu_{N-2}^\otimes (\chi_{N-1,n}) -\sum_{s=1}^{N-2} \mu_s^\otimes (\phi_{s,n}^\oplus) \\
&\ge \dots \ge  \sum_{s=2}^N \parallel \chi_{s,n} -\chi_{s-1,n} - \phi_{s-1,n}^\oplus \parallel_{L^1( \mu_{s-1}^\otimes )}
\end{align*}
where the third and sixth inequality is due to the  convexity of $\chi$ with $\mu_t^\otimes \preceq_c \mu_{t+1}^\otimes$, and the fifth equality is by the nonnegativity $\chi_{t,n} -\chi_{t-1,n} - \phi_{t-1,n}^\oplus \ge 0$ from \eqref{chiineq}. On the other hand, the nonpositivity \eqref{dual} yields
\begin{align*}
\sum_{s=1}^{N} \big( \phi_{s,n}^\oplus (x_s) + h_{s,n}(\bar x_s) \cdot \Delta x_s \big) \le \chi_{N,n} (x_N) + \phi^\oplus_{N,n} (x_N) \le 0
\end{align*}
where the first inequality follows by taking supremum over $x_1,...,x_{N-1}$ (recall \eqref{chidefinition} and $h_{N,n} \equiv 0$). Integrating with any $\pi \in {\rm VMT}(\mu)$ yields $\parallel \chi_{N,n} (x_N) + \phi^\oplus_{N,n}\parallel_{L^1( \mu_{N}^\otimes )} \le -\sum_{s=1}^{N} \mu_s^\otimes (\phi_{s,n}^\oplus) \le C$. We thus conclude
\be\label{L1bound}
\parallel \chi_{t+1,n} -\chi_{t,n} -
 \phi_{t,n}^\oplus \parallel_{L^1( \mu_{t}^\otimes )}\, \le C \ \text{ for all } n \in \N, t \in [N].
\ee
This uniform $L^1$ bound, in conjunction with the local uniform bound \eqref{boundconvex} and Koml{\'o}s compactness theorem, can imply the desired almost sure convergence that we now present. For this, we will extend the argument given in \cite{Lim23} for two-period case $N=2$ into arbitrary $N \ge 2$. For each $k \in \N$, let $\mu_{t,i,k}$ be the restriction of $\mu_{t,i}$ on $J_{t-1,i,k}$ (where $J_{0,i,k} := J_{1,i,k}$) then normalized to be a probability distribution. Let $\mu^\otimes_{t,k} = \otimes_i \mu_{t,i,k}$, so that $\mu^\otimes_{t,k}(J_{t,k})=1$ where $J_{t,k} := \otimes_i J_{t,i,k}$. Define
\begin{align*}
    v_{t,i,k,n} := \int \phi_{t,i,n} \,d\mu_{t,i,k}, \quad t \in [N], i \in [d], k\in \N, n \in \N.
\end{align*}
For each $k \in\N$, we will show that there exists $C= C(k)$ such that
\begin{align}\label{supbound}
\sup_n \parallel \phi_{t,i,n} - v_{t,i,k,n} \parallel_{L^1(\mu_{t,i,k})} \, \le C.
\end{align}
To see this, observe that \eqref{boundconvex}, \eqref{L1bound} and the fact $J_{t-1,k} \subset J_{t,k}$ imply
\begin{align*}
&\parallel \phi_{t,n}^\oplus \parallel_{L^1(\mu^\otimes_{t,k})} \\ 
&\le\,  \parallel \chi_{t+1,n} -\chi_{t,n} -
 \phi_{t,n}^\oplus \parallel_{L^1(\mu^\otimes_{t,k})} + \parallel \chi_{t,n} - \chi_{t+1,n}  \parallel_{L^1(\mu^\otimes_{t,k})} \\
&\le C + M_k =: C
\end{align*}
where we used $\chi_{t+1,n} -\chi_{t,n} -
 \phi_{t,n}^\oplus \ge 0$ to get the bound $ \parallel \chi_{t+1,n} -\chi_{t,n} -
 \phi_{t,n}^\oplus \parallel_{L^1(\mu^\otimes_{t,k})}\, \le C$ from \eqref{L1bound}. From this, we obtain the bound
\begin{align}\label{boundv}
\bigg| \sum_{i=1}^d v_{t,i,k,n} \bigg| \le \, \parallel \phi_{t,n}^\oplus \parallel_{L^1(\mu^\otimes_{t,k})}  \, \le C \, \text{ for all } n,
\end{align}
where the first inequality is by Jensen's inequality.
Next, because $\phi_{t,n}^\oplus \le M_k$ on $J_{t-1,k}$ by \eqref{chiineq} and \eqref{boundconvex}, by taking supremum, we have
\begin{align*}
\sum_{i=1}^d \sup_{x_{t,i} \in J_{t,i,k}} \phi_{t,i,n} (x_{t,i}) \le M_k  \, \text{ for all } n,
\end{align*}
and note that clearly $v_{t,i,k,n} \le \sup_{x_{t,i} \in J_{t,i,k}} \phi_{t,i,n} (x_{t,i})$, so in particular,
\begin{align*}
\sup_{x_{t,1} \in J_{t,1,k}} \phi_{t,1,n} (x_{t,1})+ \sum_{i=2}^d v_{t,i,k,n}  \le M_k.
\end{align*}
Define  $\hat v_{t,1,k,n} := -\sum_{i=2}^d v_{t,i,k,n}$. Since $\phi_{t,n}^\oplus  \le M_{k}$ on $J_{t-1,k}$, we have
\begin{align*}
C \ge \,\,\parallel M_{k} - \phi_{t,n}^\oplus \parallel_{L^1(\mu^\otimes_{t,k})}\, = M_k - \int ( \phi_{t,1,n} + \sum_{i=2}^d v_{t,i,k,n}) d\mu_{t,1,k}.
\end{align*}
This implies that $ \sup_n \parallel  \phi_{t,1,n} - \hat v_{t,1,k,n} \parallel_{L^1(\mu_{t,1,k})}$ is bounded, and then by \eqref{boundv}, $ \sup_n \parallel  \phi_{t,1,n} - v_{t,1,k,n} \parallel_{L^1(\mu_{t,1,k})}$ is bounded. This yields \eqref{supbound}.

We can now apply the Koml{\'o}s lemma, which states that every $L^1$-bounded sequence of real functions contains a subsequence such that the arithmetic means of all its subsequences converge pointwise almost everywhere. 

Let $\tilde v_{t,1,k,n}=  \frac{1}{n} \sum_{m=1}^n \hat v_{t,1,k,m}$, and $\tilde v_{t,i,k,n}=  \frac{1}{n} \sum_{m=1}^n v_{t,i,k,m}$ for $i \ge 2$.
Now for each $k \in \N$, a repeated application of Koml{\'o}s lemma yields that there exists a subsequence $\{\phi_{t,i,k,n}\}_n$ of $\{\phi_{t,i,n}\}_n$, such that

(i) $\{\phi_{t,i,k+1,n}\}_n$ is a further subsequence of $\{\phi_{t,i,k,n}\}_n$, and

(ii) $\tilde \phi_{t,i,k,n}(x_{t,i}) - \tilde v_{t,i,k,n}$ converges $\mu_{t,i,k}$ - a.s. \, as \, $n \to \infty$,\\
where $\tilde \phi_{t,i,k,n}:= \frac{1}{n} \sum_{m=1}^n \phi_{t,i,k,m}$. Note that for each $k$, our choice of a subsequence index can be made identical for every $t$ and $i$,  since there are finitely many indices of $t,i$.  Then we select the diagonal sequence  
\[
\Phi_{t,i,n} := \phi_{t,i,n,n}\]
 and again define 
 \begin{align*}
 w_{t,i,k,n} &= \int \Phi_{t,i,n} d\mu_{t,i,k}, \q 2\le i \le d,\\
 \hat w_{t,1,k,n} &= -\sum_{i=2}^d w_{t,i,k,n}, \\
 \tilde \Phi_{t,i,n}(x_{t,i}) &= \frac{1}{n} \sum_{m=1}^n \Phi_{t,i,m} (x_{t,i}), \\
 \tilde w_{t,1,k,n} &=  \frac{1}{n} \sum_{m=1}^n \hat w_{t,1,k,m}, \\
 \tilde w_{t,i,k,n} &=  \frac{1}{n} \sum_{m=1}^n w_{t,i,k,m}, \q 2 \le i \le d.
 \end{align*}
 We finally claim that 
 \begin{align}\label{convergePhi}
\tilde \Phi_{t,i,n}(x_{t,i})  - \tilde w_{t,i,1,n} \, \text{converges } \, \mu_{t,i} - a.s. \, \text{ for all } t \in [N], i \in [d].
\end{align}
Note that the dependence on $k$ has now been removed. To prove the claim, since $\{\Phi_{t,i,n}\}_n$ is a subsequence of $\{\phi_{t,i,k,n}\}_n$ for every $k\in \N$,   Koml{\'o}s lemma implies
 \begin{align}\label{convergePhik}
\tilde \Phi_{t,i,n}(x_{t,i})  - \tilde w_{t,i,k,n} \ \text{converges } \, \mu_{t,i,k} - a.s. \, \text{ for all } t \text{ and } i. 
\end{align}
In particular, both $\{\tilde \Phi_{t,i,n}(x_{t,i})  - \tilde w_{t,i,1,n}\}_n$ and $\{\tilde \Phi_{t,i,n}(x_{t,i})  - \tilde w_{t,i,k,n}\}_n$ converge $ \mu_{t,i,1}$ - a.s. as $n\to \infty$, hence their difference $ \{\tilde w_{t,i,1,n} -  \tilde w_{t,i,k,n}\}_n$ also converges for any fixed $k$. With \eqref{convergePhik}, this implies \eqref{convergePhi}. Finally, the fact that $\sum_{i=1}^d \tilde w_{t,i,1,n} = 0$ allows us to replace the approximating dual maximizer $(\phi_n, h_n)_{n}$ by $(\psi_n, \tilde h_n)_{n}$, where $\psi_{t,i,n} :=  \tilde \Phi_{t,i,n}(x_{t,i})  - \tilde w_{t,i,1,n}$ and $\tilde h_{t,i,n}$ is a suitable Ces{\`a}ro mean of a subsequence of $(h_{t,i,n})_n$ which is chosen consistently with the selection of $\tilde \Phi_{t,i,n}$.
\\

{\bf Step 3.} We will prove the convergence of $\{\chi_{t,n}\}_n$ defined in \eqref{chidefinition}. Let $(\phi_n, h_n)_{n}$ be an approximating dual maximizer such that 
\be\label{asconverge}
\{ \phi_{t,i,n} \}_n \text{ converges to a function } \phi_{t,i} \ \mu_{t,i}-a.s. \text{ as } n \to \infty. 
\ee
Now it is unclear that those nice properties $\chi_{t,n}$ enjoyed in Step 2, e.g., \eqref{convex0} or \eqref{boundconvex}, keep holding, because $(\phi_n, h_n)_{n}$ that we now consider is a Ces{\`a}ro mean of a subsequence of an approximating dual maximizer. But notice that this implies at least the upper bound in \eqref{boundconvex} continues to hold, that is,
\begin{align}\label{boundconvex1}
\sup_n \chi_{t,n} \le M_{k} \ \ \text{on} \ \ J_{t-1,k}.
\end{align}
We will show that the normalization of $\chi_{t,n}$ \eqref{convex0} can be restored. Fix any $a \in I_1$ as in Step 2, and define $\phi^\oplus_t := \sum_{i} \phi_{t,i}$. First, \eqref{asconverge}  implies that there exists $a_t \in I_t$ for every $t \in [N]$ such that 
\be
\lim_{n \to \infty} \phi_{t,n}^\oplus (a_t) = \phi_t^\oplus (a_t) \in \R.
\ee
In view of \eqref{chiineq} which gives $\phi^\oplus_{1,n} \le \chi_{2,n}$, this implies
\be\label{goodlowbound}
\inf_n \chi_{2,n} (a_1) > -\infty.
\ee
On the other hand, since $I_1 = {\rm int} (J_1)$ and $J_{1,k} \nearrow J_{1}$, for large enough $k$ we have $\{a, a_1\} \subset {\rm int}(J_{1,k})$. Now \eqref{boundconvex1}, \eqref{goodlowbound} imply that both $\chi_{2,n}(a)$ and $\nabla \chi_{2,n} (a)$ are uniformly bounded in $n$, where $\nabla \chi_{2,n} (a) \in \partial \chi_{2,n} (a)$ is  a subgradient  of the convex function $\chi_{2,n}$ at $a$. Hence by taking a subsequence, we can assume that $\{\chi_{2,n}(a)\}_n$ and $\{\nabla \chi_{2,n} (a)\}_n$ both converge. Then as in Step 1, define an affine function $L_{2,n}(y) = \chi_{2,n}(a) + \nabla \chi_{2,n} (a) \cdot (y-a)$, and replace $\phi_{1,n}^\oplus(x_1)$ with $\phi_{1,n}^\oplus(x_1) - L_{2,n}(x_1)$, $\phi_{2,n}^\oplus(x_2)$ with $\phi_{2,n}^\oplus(x_2) + L_{2,n}(x_2)$, and finally $h_{1,n}(x_1)$ with $h_{1,n}(x_1) - \nabla \chi_{2,n}(a)$. This yields $\chi_{2,n}(a) = \nabla \chi_{2,n}(a) = 0$ for all $n$, while the a.s. convergence of $\phi_{t,i, n}$ and the bound \eqref{boundconvex1} are retained. Next, the inequality $\chi_{3,n} \ge \chi_{2,n} + \phi^\oplus_{2,n}$ with $\chi_{2,n} \ge 0$ yields $\inf_n \chi_{3,n}(a_2) > -\infty$. Thus we can repeat the argument and achieve the normalization \eqref{convex0}, while the a.s. convergence of $\phi_{t,i, n}$ and the bound \eqref{boundconvex1} are still retained. Now  \eqref{intbypart} and Proposition \ref{conv} yield $\lim_{n \to \infty} \chi_{t,n} = \chi_t$ pointwise on $J_{t-1}$.
\\

{\bf Step 4.}  We have obtained the almost sure limit functions $(\phi_{t,i})_{t,i}$. We may define $\phi_{t,i} := -\infty$ on a $\mu_{t,i}$-null set which includes $\R \setminus I_t$, so that they are defined everywhere on $\R$. We will now show there exist functions $h_t = (h_{t,i})_{i} : \R^{td} \to \R^d$ for all $t \in [N]$ with $h_N \equiv 0$, such that 
\begin{align}\label{duallimit}
\sum_{t=1}^{N} \big( \phi_{t}^\oplus (x_t) + h_{t}(\bar x_t) \cdot \Delta x_t \big) \le c(x) \ \ \text{for all } x \in \R^{Nd}.
\end{align}
For any function $f : \R^d \to \R \cup \{+\infty\}$ which is bounded below by an affine function, let ${\rm conv}[f]:\R^d \to \R \cup \{+\infty\}$ denote the lower semi-continuous convex envelope of $f$, that is the supremum of all affine functions $l$ satisfying $ l \le f$ (If there is no such $l$, let ${\rm conv}[f] \equiv -\infty$.) We will inductively obtain $h_{N-1}, h_{N-2},...,h_1$. Let us rewrite \eqref{dual} as
\begin{align}\label{a1}
\sum_{t=1}^{N-1} \big( \phi_{t,n}^\oplus (x_t) + h_{t,n}(\bar x_t) \cdot \Delta x_t \big) \le c(x) -  \phi_{N,n}^\oplus (x_N).
\end{align}
Define $H_{N-1,n}(\bar x_{N-1},x_N) = {\rm conv}[c(\bar x_{N-1},\,\cdot\,) - \phi_{N,n}^\oplus (\,\cdot\,)](x_N)$. We  have
\begin{align*}
\sum_{t=1}^{N-1} \big( \phi_{t,n}^\oplus (x_t) + h_{t,n}(\bar x_t) \cdot \Delta x_t \big) \le H_{N-1,n}(\bar x_{N-1},x_N) \le c(x) -  \phi_{N,n}^\oplus (x_N)
\end{align*}
because the left hand side is affine in $x_N$. If we let $x_N = x_{N-1}$, we get
\begin{align}\label{a2}
\sum_{t=1}^{N-2} \big( \phi_{t,n}^\oplus (x_t) + h_{t,n}(\bar x_t) \cdot \Delta x_t \big) \le H_{N-1,n}(\bar x_{N-1},x_{N-1}) -  \phi_{N-1,n}^\oplus (x_{N-1}).
\end{align}
Notice \eqref{a1} and \eqref{a2} have the same structure. With the convention $H_{N,n} (x) := c(x)$,
 this allows us to inductively deduce, backward in $t$,
 \begin{align}\label{a3}
\sum_{s=1}^{t} \big( \phi_{s,n}^\oplus (x_s) + h_{s,n}(\bar x_s) \cdot \Delta x_s \big) \le H_{t+1,n}(\bar x_{t+1},x_{t+1}) -  \phi_{t+1,n}^\oplus (x_{t+1})
\end{align}
for $t=1,...,N-1$, where
\begin{align}\label{a4}
H_{t,n}(\bar x_{t},x_{t+1}) := {\rm conv}[H_{t+1,n}(\bar x_{t},\,\cdot\,) - \phi_{t+1,n}^\oplus (\,\cdot\,)](x_{t+1})
\end{align}
with an abuse of notation $H_{t+1,n}(\bar x_{t},x_{t+1}) := H_{t+1,n}(\bar x_{t+1},x_{t+1}) $. 

Now by dropping $n$, we analogously define
\begin{align}\label{a5}
H_{t}(\bar x_{t},x_{t+1}) := {\rm conv}[H_{t+1}(\bar x_{t},\,\cdot\,) - \phi_{t+1}^\oplus (\,\cdot\,)](x_{t+1})
\end{align}
with $H_{N} (x) := c(x)$. Next, since the $\limsup$ of convex functions is convex, in conjunction with the almost sure convergence, we have
\begin{align*}
\limsup_{n \to \infty} H_{N-1,n}(\bar x_{N-1},x_N) 
&= \limsup_{n \to \infty} {\rm conv}[c(\bar x_{N-1},\,\cdot\,) - \phi_{N,n}^\oplus (\,\cdot\,)](x_N) \\
&\le {\rm conv}[\limsup_{n \to \infty}\big{(} c(\bar x_{N-1},\,\cdot\,) - \phi_{N,n}^\oplus (\,\cdot\,) \big{)}](x_N)  \\
&\le {\rm conv}[c(\bar x_{N-1},\,\cdot\,) - \phi_{N}^\oplus (\,\cdot\,)](x_N) \\
&= H_{N-1}(\bar x_{N-1},x_N).
\end{align*}
This allows us to inductively deduce, for $t=1,...,N-1$,
\begin{align*}
\limsup_{n \to \infty} H_{t,n}(\bar x_{t},x_{t+1}) 
&= \limsup_{n \to \infty} {\rm conv}[H_{t+1,n}(\bar x_{t},\,\cdot\,) - \phi_{t+1,n}^\oplus (\,\cdot\,)](x_{t+1}) \\
&\le {\rm conv}[\limsup_{n \to \infty}\big{(} H_{t+1,n}(\bar x_{t},\,\cdot\,) - \phi_{t+1,n}^\oplus (\,\cdot\,)\big{)}](x_{t+1})  \\
&\le {\rm conv}[H_{t+1}(\bar x_{t},\,\cdot\,) - \phi_{t+1}^\oplus (\,\cdot\,)](x_{t+1}) \\
&= H_{t}(\bar x_{t},x_{t+1}).
\end{align*}
We need to discuss continuity of the convex function $x_{t+1} \mapsto H_{t}(\bar x_{t}, x_{t+1})$. The following inequality from \eqref{a5}
\be
H_{t}(\bar x_{t},x_{t+1}) \le H_{t+1}(\bar x_{t+1}, x_{t+1}) - \phi_{t+1}^\oplus (x_{t+1}) \nn 
\ee 
becomes $H_{N-1}(\bar x_{N-1},x_{N}) \le c(x) - \phi_{N}^\oplus (x_{N})$ when $t=N-1$. Then the $\mu_N^\otimes$-a.s. finiteness of $\phi_{N}^\oplus$ gives, by convexity, $H_{N-1}(\bar x_{N-1},x_{N}) < \infty$ if $x_N \in J_{N-1}$.  Backward induction in $t$ then gives $H_{t}(\bar x_{t},x_{t+1}) < \infty$ if $x_{t+1} \in J_{t}$. This implies that for any $\bar x_t \in \R^{td}$, if there exists $y_0 \in \R^d$ such that $H_{t}(\bar x_{t}, y_0) > -\infty$, then $y \mapsto H_{t}(\bar x_{t}, y)$ is real-valued thus continuous in $J_{t}$. Now \eqref{a3},  \eqref{a4} gives
\begin{align*}
\phi_{1,n}^\oplus (x_1) + h_{1,n}(x_1) \cdot \Delta x_1 
 \le H_{1,n}( x_{1},x_{2}) \le H_{2,n}( x_{1},x_{2}, x_{2}) -  \phi_{2,n}^\oplus (x_{2}).
\end{align*}
Letting $x_2=x_1$ gives $\phi_{1,n}^\oplus (x_1) \le H_{1,n}( x_{1},x_{1})$. From this and the almost sure convergence, taking $\limsup$ yields
\begin{align*}
\phi_{1}^\oplus (x_1) \le H_{1}( x_{1},x_{1}) \ \text{ and } \ H_{1}( x_{1},x_{2}) \le H_{2}( x_{1},x_{2},x_{2}) -  \phi_{2}^\oplus (x_{2}). 
\end{align*}
Set $A_t:= \{ x_t \in \R^d \, | \, \phi_{t}^\oplus (x_t) \in \R \}$, $t \in [N]$, and note that $A_t \subset I_t$. Since $x_2 \mapsto H_1(x_1,x_2)$ is continuous in $J_1$ for every $x_1 \in A_1$, the subdifferential $\partial H_1(x_1, \,\cdot\,) (x_2)$ is nonempty, convex and compact for every $x_2 \in I_1 = {\rm int}(J_1)$. This allows us to choose a measurable function $h_1 : A_1 \to \R^d$ satisfying $h_1(x_1) \in \partial H_1(x_1, \,\cdot\,)(x_1)$. 
Then for $x_1 \in A_1$, we have
\begin{align*}
\phi_{1}^\oplus (x_1) + h_1(x_1) \cdot (x_2-x_1) &\le H_1(x_1,x_1) + h_1(x_1) \cdot (x_2-x_1) \\
 &\le H_1(x_1,x_2) \\
 &\le H_{2}( x_{1},x_{2}, x_{2}) -  \phi_{2}^\oplus (x_{2}). 
 \end{align*}
In particular, for $x_1 \in A_1$ and $x_2 \in A_2$, it holds $H_{2}( x_{1},x_{2}, x_{2}) > -\infty$. Hence again we can choose $h_2 : A_1 \times A_2 \to \R^d$ that satisfies $h_2(x_1,x_2) \in \partial H_2 (x_1, x_2, \,\cdot\,)(x_2)$. Then for every $x_1 \in A_1$ and $x_2 \in A_2$, we have
\begin{align*}
&\phi_{1}^\oplus (x_1) +  \phi_{2}^\oplus (x_{2}) + h_1(x_1) \cdot (x_2-x_1) + h_2(x_1,x_2) \cdot (x_3-x_2) \\&\le H_{2}( x_{1},x_{2}, x_{2}) + h_2(x_1,x_2) \cdot (x_3-x_2)\\
&\le H_{2}( x_{1},x_{2}, x_{3})\\
&\le H_{3}( x_{1},x_{2}, x_{3}, x_{3}) - \phi_{3}^\oplus (x_{3}).
 \end{align*}
By induction in $t$, we obtain $h_t : A_1 \times \dots \times A_t \to \R^d$ (with $h_N \equiv 0$), satisfying \eqref{duallimit} as desired. We may define $h_t = 0$ in $\R^{td} \setminus A_1 \times \dots \times A_t$, noting that the left hand side of \eqref{duallimit} is $-\infty$ if $x_t \notin A_t$ for some $t$.
\\

{\bf Step 5.} We will show that for any functions $h_t : \R^{td} \to \R^d$, $t=1,...,N$ with $h_N \equiv 0$ satisfying  \eqref{duallimit} (whose existence was shown in Step 4), and for any minimizer $\pi^*$ for the problem \eqref{VMOT}, it holds
\begin{align}\label{pointwisedualeq}
\sum_{t=1}^{N} \big( \phi_{t}^\oplus (x_t) + h_{t}(\bar x_t) \cdot \Delta x_t \big) = c(x), \quad \pi^* - a.s..
\end{align}
In other words, every minimizer $\pi^*$ is concentrated on the contact set
\[
\Gamma := \bigg\{x \in \R^{Nd} \,\bigg|\, \sum_{t=1}^{N} \big( \phi_{t}^\oplus (x_t) + h_{t}(\bar x_t) \cdot \Delta x_t \big) = c(x) \bigg\}
\]
whenever $\{h_t\}_t$ is chosen to satisfy \eqref{duallimit}. This will complete the proof.

Recall $\phi_{t,i,n} \to \phi_{t,i}$ $\mu_{t,i}$-a.s. and $\chi_{t,n} \to \chi_t$ in $J_{t-1}$ where $\chi_{t,n}$ is defined in \eqref{chidefinition} with $\chi_t$ being its limit. For any $\pi \in {\rm VMT}(\mu)$ (not necessarily an optimizer), we have $c \in L^1(\pi)$ by the assumption of Theorem \ref{main}. We claim:
\begin{align}\label{claim1}
\limsup_{n \to \infty}& \int \sum_{t=1}^{N} \big( \phi_{t,n}^\oplus (x_t) + h_{t,n}(\bar x_t) \cdot \Delta x_t \big)  d\pi  \\
&\le \int \sum_{t=1}^{N} \big( \phi_{t}^\oplus (x_t) + h_{t}(\bar x_t) \cdot \Delta x_t \big) d\pi. \nn
\end{align}
To see how the claim implies \eqref{pointwisedualeq}, let $\pi^*$ be any minimizer for \eqref{VMOT} (which exists by the assumption on $c$). Then 
$P(c) = \int c \,d\pi^*$, hence
\begin{align*}
P(c)&=\lim_{n \to \infty}  \int \sum_{t=1}^{N} \big( \phi_{t,n}^\oplus (x_t) + h_{t,n}(\bar x_t) \cdot \Delta x_t \big) d\pi^* \\ 
&\le \int \sum_{t=1}^{N} \big( \phi_{t}^\oplus (x_t) + h_{t}(\bar x_t) \cdot \Delta x_t \big) d\pi^* \\ 
&\le  \int c(x) \,d\pi^* = P(c)
\end{align*}
hence equality holds throughout. Notice that this implies \eqref{pointwisedualeq}.

To prove \eqref{claim1}, we will extend the argument given in \cite{bnt}, \cite{Lim23} into the current multi-period vector-valued setting. Fix any $\pi \in {\rm VMT}(\mu)$. Denoting $\pi = {\rm Law}(X)$ where $X=(X_1,...,X_N)$ is an $\R^d$-valued martingale under $\pi$, we let $\pi_t := {\rm Law}(X_t)$. Then as in Step 2 (but using $\pi_t \preceq_c \pi_{t+1}$ instead of $\mu^\otimes_t \preceq_c \mu^\otimes_{t+1}$), by \eqref{maximizing}, \eqref{chiineq}, we have (cf. \eqref{L1bound})
\be\label{L1bound2}
\parallel \chi_{t+1,n} -\chi_{t,n} -
 \phi_{t,n}^\oplus \parallel_{L^1( \pi_{t})} \, \le C \ \text{ for all } n \in \N, t \in [N].
\ee
From this, as $\phi^\oplus_{t,n} \to \phi^\oplus_{t}$ and $\chi_{t,n} \to \chi_t$, we deduce by Fatou's lemma,
\begin{align*}
&\chi_{t+1} -\chi_{t} -
 \phi_{t}^\oplus \in L^1(\pi_t), \ \text{ and}\\
\limsup_{n \to \infty} \int (&\phi_{t,n}^\oplus + \chi_{t,n} - \chi_{t+1,n}) \,d\pi_t \le  \int (\phi_{t}^\oplus + \chi_{t} - \chi_{t+1} )\,d\pi_t,
\end{align*}
recalling $\chi_{1,n} = \chi_{N+1,n} \equiv 0$ and $\pi_{t} (J_{t-1}) = 1$. This allows us to deduce
\begin{align}
\limsup_{n \to \infty}  &\int \sum_{t=1}^{N} \big( \phi_{t,n}^\oplus (x_t) + h_{t,n}(\bar x_t) \cdot \Delta x_t \big) d\pi \nn \\
= \limsup_{n \to \infty}  &\int \sum_{t=1}^{N} \big( \phi_{t,n}^\oplus (x_t) + \chi_{t,n}(x_t) - \chi_{t+1,n}(x_t) \nn \\
&\ \ \q\q\q - \chi_{t,n}(x_t) + \chi_{t+1,n}(x_t)+ h_{t,n}(\bar x_t) \cdot \Delta x_t \big) d\pi \nn \\
\le  \sum_{t=1}^N \int &\big( \phi_{t}^\oplus + \chi_{t} - \chi_{t+1} \big) d\pi_t  \label{a8} \\
+ \limsup_{n \to \infty} &\int \sum_{t=1}^{N-1} \big{(}  \chi_{t+1,n}(x_t)  - \chi_{t+1,n}(x_{t+1})  + h_{t,n}(\bar x_t) \cdot \Delta x_t  \big{)} \, d\pi, \nn
\end{align}
since $ \sum_{t=1}^N \chi_{t,n}(x_{t}) = \sum_{t=1}^N \chi_{t+1,n}(x_{t+1})$. Now denote $\bar X_t = (X_1,...,X_t)$, $\pi^t := {\rm Law}(\bar X_t) \in \cP(\R^{td})$. Then we can write $\pi^{t+1} = \pi_{\bar x_t} \otimes \pi^t$, where $\pi_{\bar x_t} \in \cP(\R^d)$ is the conditional distribution of $X_{t+1}$ given $\bar X_t = \bar x_t$ under the martingale law $\pi$. Martingale property means that $\int y \, d \pi_{\bar x_t}(y) = x_t$. For each $t$, choose a sequence of functions $\xi_{t,n} : I_t \to \R^d$ satisfying $\xi_{t,n}(x_t) \in \partial \chi_{t+1,n}(x_t)$. Then we can compute
\begin{align*}
&\int \sum_{t=1}^{N-1} \big{(}  \chi_{t+1,n}(x_t)  - \chi_{t+1,n}(x_{t+1})  + h_{t,n}(\bar x_t) \cdot \Delta x_t  \big{)} \, d\pi \\
&= \iint \sum_{t=1}^{N-1} \big{(}  \chi_{t+1,n}(x_t)  - \chi_{t+1,n}(x_{t+1})  + h_{t,n}(\bar x_t) \cdot \Delta x_t  \big{)} d\pi_{\bar x_{N-1}}(x_N) d\pi^{N-1}(\bar x_{N-1}) \\
&= \int \bigg( \int \big( \chi_{N,n}(x_{N-1})  - \chi_{N,n}(x_{N})  + h_{N-1,n}(\bar x_{N-1}) \cdot \Delta x_{N-1}\big)  d\pi_{\bar x_{N-1}}(x_N)\\
&\q\q\q + \sum_{t=1}^{N-2} \big{(}  \chi_{t+1,n}(x_t)  - \chi_{t+1,n}(x_{t+1})  + h_{t,n}(\bar x_t) \cdot \Delta x_t  \big{)}\bigg)  d\pi^{N-1}(\bar x_{N-1}) \\
&= \int \bigg( \int \big( \chi_{N,n}(x_{N-1})  - \chi_{N,n}(x_{N})  + \xi_{N-1,n}( x_{N-1}) \cdot \Delta x_{N-1}\big)  d\pi_{\bar x_{N-1}}(x_N)\\
&\q\q\q + \sum_{t=1}^{N-2} \big(\chi_{t+1,n}(x_t)  - \chi_{t+1,n}(x_{t+1})  + h_{t,n}(\bar x_t) \cdot \Delta x_t  \big{)} \bigg)  d\pi^{N-1}(\bar x_{N-1}),
\end{align*}
where the last equality is due to the martingale property
\begin{align}\label{a7}
&\int h_{N-1,n}(\bar x_{N-1}) \cdot \Delta x_{N-1} \, d\pi_{\bar x_{N-1}}(x_N) \\
&= \int \xi_{N-1,n}( x_{N-1}) \cdot \Delta x_{N-1} \, d\pi_{\bar x_{N-1}}(x_N) \nn \\
&=0. \nn
\end{align}
Furthermore, by definition of $\xi$, we have
\be
\chi_{N,n}(x_{N-1})  - \chi_{N,n}(x_{N})  + \xi_{N-1,n}( x_{N-1}) \cdot \Delta x_{N-1} \le 0. \nn
\ee
This allows us to further disintegrate $\pi^{N-1} = \pi_{\bar x_{N-2}} \otimes \pi^{N-2}$ and apply the same argument. By repeating, we obtain
\begin{align*}
&\int \sum_{t=1}^{N-1} \big{(}  \chi_{t+1,n}(x_t)  - \chi_{t+1,n}(x_{t+1})  + h_{t,n}(\bar x_t) \cdot \Delta x_t  \big{)} \, d\pi \\
&= \int\dots \int \big( \chi_{N,n}(x_{N-1})  - \chi_{N,n}(x_{N})  + \xi_{N-1,n}( x_{N-1}) \cdot \Delta x_{N-1}\big)  d\pi_{\bar x_{N-1}}(x_N)\\
&\ + \big( \chi_{N-1,n}(x_{N-2})  - \chi_{N-1,n}(x_{N-1})  + \xi_{N-2,n}( x_{N-2}) \cdot \Delta x_{N-2}\big)  d\pi_{\bar x_{N-2}}(x_{N-1})\\
&\ +\dots + \big( \chi_{2,n}(x_{1})  - \chi_{2,n}(x_{2})  + \xi_{1,n}( x_{1}) \cdot \Delta x_{1}\big)  d\pi_{\bar x_{1}}(x_{2}) d\pi^1(x_1).
\end{align*}
Since $\chi_{t+1,n}(x_{t})  - \chi_{t+1,n}(x_{t+1})  + \xi_{t,n}( x_{t}) \cdot \Delta x_{t} \le 0$ for all $t$, repeated application of Fatou's lemma allows $\limsup$ to continue to penetrate into the innermost integral. Along the way, we also use the inequality $\limsup (a_n + b_n) \le \limsup a_n + \limsup b_n$. This eventually yield
\begin{align*}
&\limsup_{n\to\infty} \int \sum_{t=1}^{N-1} \big{(}  \chi_{t+1,n}(x_t)  - \chi_{t+1,n}(x_{t+1})  + h_{t,n}(\bar x_t) \cdot \Delta x_t  \big{)} \, d\pi \\
&\le \int\dots \int \big( \chi_{N}(x_{N-1})  - \chi_{N}(x_{N})  + \xi_{N-1}( x_{N-1}) \cdot \Delta x_{N-1}\big)  d\pi_{\bar x_{N-1}}(x_N)\\
&\ + \big( \chi_{N-1}(x_{N-2})  - \chi_{N-1}(x_{N-1})  + \xi_{N-2}( x_{N-2}) \cdot \Delta x_{N-2}\big)  d\pi_{\bar x_{N-2}}(x_{N-1})\\
&\ +\dots + \big( \chi_{2}(x_{1})  - \chi_{2}(x_{2})  + \xi_{1}( x_{1}) \cdot \Delta x_{1}\big)  d\pi_{\bar x_{1}}(x_{2}) d\pi^1(x_1)
\end{align*}
for some $\xi_t(x_t)  \in \partial \chi_{t+1}(x_t)$ which is a limit point of the bounded sequence $\{ \xi_{t,n}(x_t)\}_n$. Lastly, the martingale property \eqref{a7} allows us to substitute $\xi_t(x_t)$ back to $h_t(\bar x_t)$ and get
\begin{align*}
&\limsup_{n\to\infty} \int \sum_{t=1}^{N-1} \big{(}  \chi_{t+1,n}(x_t)  - \chi_{t+1,n}(x_{t+1})  + h_{t,n}(\bar x_t) \cdot \Delta x_t  \big{)} \, d\pi \\
&\le \int\dots \int \big( \chi_{N}(x_{N-1})  - \chi_{N}(x_{N})  + h_{N-1}( \bar x_{N-1}) \cdot \Delta x_{N-1}\big)  d\pi_{\bar x_{N-1}}(x_N)\\
&\ + \big( \chi_{N-1}(x_{N-2})  - \chi_{N-1}(x_{N-1})  + h_{N-2}( \bar x_{N-2}) \cdot \Delta x_{N-2}\big)  d\pi_{\bar x_{N-2}}(x_{N-1})\\
&\ +\dots + \big( \chi_{2}(x_{1})  - \chi_{2}(x_{2})  + h_{1}(\bar x_{1}) \cdot \Delta x_{1}\big)  d\pi_{\bar x_{1}}(x_{2}) d\pi^1(x_1) \\
& =  \int \sum_{t=1}^{N-1} \big{(}  \chi_{t+1}(x_t)  - \chi_{t+1}(x_{t+1})  + h_{t}(\bar x_t) \cdot \Delta x_t  \big{)} \, d\pi.
\end{align*}
Finally, in \eqref{a8} we can combine the integrals which then yields the claim \eqref{claim1}, hence the theorem.
\end{proof}


\begin{thebibliography}{99}

\bibitem{bch}
M.~Beiglb{\"o}ck, A.M.G.~Cox, and M.~Huesmann.
\newblock Optimal transport and Skorokhod embedding.
\newblock {\em Invent. Math.}, 208 (2017) 327--400.

\bibitem{BeHePe11}
M.~Beiglb{\"o}ck, P.~Henry-Labord{\`e}re, and F.~Penkner.
\newblock Model-independent bounds for option prices: a mass transport approach.
\newblock {\em Finance and Stochastics}, 17 (2013) 477--501.

\bibitem{bj}
M.~Beiglb{\"o}ck and N.~Juillet.
\newblock On a problem of optimal transport under marginal martingale constraints.
\newblock {\em Ann. Probab.}, 44(1) (2016) 42--106.

\bibitem{blo}
M.~Beiglb{\"o}ck, T.~Lim and J.~Ob{\l}{\'o}j.
\newblock Dual attainment for the martingale transport problem.
\newblock {\em Bernoulli}. 25 (2019), no. 3, 1640--1658. 

\bibitem{bnt}
M.~Beiglb{\"o}ck, M. Nutz and N. Touzi.
\newblock  Complete duality for martingale optimal transport on the line.
\newblock {\em Ann. Probab.}, 45(5) (2017) 3038--3074. 

\bibitem{bs73}
F.~Black and M.~Scholes.
\newblock The Pricing of Options and Corporate Liabilities.
\newblock {\em Journal of Political Economy},  Vol. 81, No. 3 (1973) 637--654.

\bibitem{br}
Y.~Brenier.
\newblock Polar factorization and monotone rearrangement of vector-valued functions.
\newblock {\em Comm. Pure Appl. Math.}, Volume 44, Issue 4 (1991) 375--417.

\bibitem{bl78}
D.T. Breeden and R.H. Litzenberger.
\newblock Prices of state-contingent claims implicit in option prices.
\newblock {\em J. Business}, 51(4):621--651 (1978).

\bibitem{BRS21}
M. Burzoni, F. Riedel and H.M. Soner.
\newblock Viability and arbitrage under Knightian uncertainty.
\newblock {\em Econometrica}, 89(3) (2021) 1207--1234.

\bibitem{ccg16}
G. Carlier, V. Chernozhukov and A. Galichon.
\newblock Vector quantile regression: an optimal transport approach.
\newblock {\em The Annals of Statistics}, 2016, Vol. 44, No. 3, 1165--1192.

\bibitem{cko}
L.~Carraro, N.E.~Karoui and J.~Ob{\l}{\'o}j.
\newblock On Az{\'e}ma-Yor processes, their optimal properties and the Bachelier-drawdown equation.
\newblock   {\em  Ann. Probab.}, Vol. 40, No. 1 (2012) 372--400.

\bibitem{CKPS21}
P. Cheridito, M. Kiiski, D.J. Prömel and H.M. Soner.
\newblock Martingale Optimal Transport Duality.
\newblock {\em Mathematische Annalen}, 379 (2021)  1685--1712.


\bibitem{cfg10}
V. Chernozhukov, I. Fernandez-Val and A. Galichon.
\newblock Quantile and Probability Curves without Crossing.
\newblock  {\em Econometrica} 78(3) (2010) 1093--1125.

\bibitem{cghh17}
V. Chernozhukov, A. Galichon, M. Hallin and M. Henry.
\newblock Monge-Kantorovich depth, quantiles, ranks and signs.
\newblock {\em The Annals of Statistics}, 2017, Vol. 45, No. 1, 223--256.

\bibitem{cghp21}
V. Chernozhukov, A. Galichon, M. Henry and B. Pass.
\newblock Identification of hedonic equilibrium and nonseperable simultaneous equations.
\newblock {\em Journal of Political Economy}, 129(3) (2021).

\bibitem{cot19}
A.M.G.~Cox, J.~Ob{\l}{\'o}j and N.~Touzi.
\newblock The Root solution to the multi-marginal embedding problem: an optimal stopping and time-reversal approach.
\newblock {\em Probab. Theory Related Fields.} 173 (2019), no. 1-2, 211--259.

\bibitem{DKSY23}
M. Dai, S. Kou, H.M. Soner and C. Yang.
\newblock Leveraged Exchange-Traded Funds with Market Closure and Frictions.
\newblock {\em Management Science}, Volume 69, Issue 4 (2023) 1935--2545.

\bibitem{D16}
M.H.A.~Davis.
\newblock Model-Free Methods in Valuation and Hedging of Derivative Securities.
\newblock {\em The Handbook of Post Crisis Financial Modeling}, 2016.

\bibitem{d18-1}
H. De March.
\newblock Local structure of multi-dimensional martingale optimal transport.
\newblock arXiv preprint. https://arxiv.org/abs/1805.09469

\bibitem{d18}
H. De March.
\newblock Quasi-sure duality for multi-dimensional martingale optimal transport.
\newblock arXiv preprint. https://arxiv.org/abs/1805.01757


\bibitem{dt19}
H. De March and N. Touzi.
\newblock Irreducible convex paving for decomposition of multidimensional martingale transport plans. 
\newblock {\em Ann. Probab.} 47 (2019), No. 3, 1726--1774.

\bibitem{DS94}
F.~Delbaen and W.~Schachermayer.
\newblock A general version of the fundamental theorem of asset pricing.
\newblock {\em Mathematische Annalen}, volume 300, 463--520 (1994)

\bibitem{DS06}
F.~Delbaen and W.~Schachermayer.
\newblock The mathematics of arbitrage.
\newblock {\em Springer Finance}, 2006.

\bibitem{DFS03}
D. Duffie, D. Filipović and W. Schachermayer.
\newblock Affine processes and applications in finance.
\newblock {\em Ann. Appl. Probab.}, 13(3): 984--1053 (2003)

\bibitem{ds1}
Y.~Dolinsky and H.M.~Soner.
\newblock Martingale optimal transport and robust hedging in continuous time.
\newblock {\em Probab. Theory Relat. Fields.}, 160 (2014) 391--427.

\bibitem{eglo}
S.~Eckstein, G.~Guo, T.~Lim, and J.~Ob{\l}{\'o}j.
\newblock Robust Pricing and Hedging of Options on Multiple Assets and Its Numerics.
\newblock {\em SIAM J. Financial Math.}, 12 (2021), no. 1, 158--188.

\bibitem{fkm11}
A. Figalli, Y-H Kim and R.J. McCann.
\newblock When is multidimensional screening a convex program?
\newblock {\em J. Econom. Theory} 146 (2011) 454--478.

\bibitem{FPS00}
J.P.~Fouque, G.~Papanicolaou and K.R.~Sircar.
\newblock Derivatives in financial markets with stochastic volatility. 
\newblock {\em Cambridge University Press}, 2000.

\bibitem{FPSS11}
J.P.~Fouque, G.~Papanicolaou, K.R.~Sircar and K.~Sølna.
\newblock Multiscale stochastic volatility for equity, interest rate, and credit derivatives.
\newblock {\em Cambridge University Press}, 2011.

\bibitem{GaHeTo11}
A.~Galichon, P.~Henry-Labord{\`e}re, and N.~Touzi.
\newblock A Stochastic Control Approach to No-Arbitrage Bounds Given
  Marginals, with an Application to Lookback Options.
\newblock {\em Annals of Applied Probability.}, Volume 24, Number 1 (2014) 312--336.

\bibitem{gm}
W.~Gangbo and R.J.~McCann.
\newblock The geometry of optimal transportation.
\newblock {\em Acta Math.}, Volume 177, Issue 2 (1996) 113--161. 

\bibitem{GKL2}
N.~Ghoussoub, Y-H. ~Kim, and T. ~Lim.
\newblock Structure of optimal martingale transport plans in general dimensions.
\newblock {\em Ann. Probab.} 47(1): 109--164 (2019). 

\bibitem{GKL3}
N.~Ghoussoub, Y-H. ~Kim and T. ~Lim.
\newblock Optimal Brownian stopping when the source and target are radially symmetric distributions. 
\newblock {\em SIAM J. Control Optim.} 58 (2020), no. 5, 2765--2789.

\bibitem{GKP21}
N.~Ghoussoub, Y-H. ~Kim and A.Z. ~Palmer.
\newblock A solution to the Monge transport problem for Brownian martingales.
\newblock {\em Ann. Probab.} 49 (2021), no. 2, 877--907. 

\bibitem{GKP22}
N.~Ghoussoub, Y-H. ~Kim and A.Z. ~Palmer.
\newblock Optimal stopping of stochastic transport minimizing submartingale costs.
\newblock {\em Trans. Amer. Math. Soc.}, 374 (2021), no. 10, 6963--6989.

\bibitem{gtt1}
G. Guo, X. Tan and N. Touzi.
\newblock On the monotonicity principle of optimal Skorokhod embedding problem.
\newblock {\em SIAM J. Control Optim.}, 54-5 (2016) 2478--2489.

\bibitem{gtt2}
G. Guo, X. Tan and N. Touzi.
\newblock Optimal Skorokhod embedding under finitely-many marginal constraints.
\newblock {\em SIAM J. Control Optim.}, 54-4 (2016) 2174--2201.

\bibitem{GH14}
J.~Guyon and P.~Henry-Labord{\`e}re.
\newblock Nonlinear option pricing.
\newblock {\em Chapman \& Hall/CRC Financ. Math. Ser.}
CRC Press, Boca Raton, FL, 2014.

\bibitem{H17}
P.~Henry-Labord{\`e}re.
\newblock Model-free Hedging: A Martingale Optimal Transport Viewpoint.
\newblock {\em Chapman \& Hall/CRC Financ. Math. Ser.}
 CRC Press, Boca Raton, FL, 2017.

\bibitem{Ho98}
D.~Hobson.
\newblock Robust hedging of the lookback option.
\newblock {\em Finance and Stochastics},  2 (1998) 329--347.

\bibitem{Ho11}
D.~Hobson.
\newblock The {S}korokhod embedding problem and model-independent bounds for
  option prices.
\newblock In {\em Paris-{P}rinceton {L}ectures on {M}athematical {F}inance
  2010}, volume 2003 of {\em Lecture Notes in Math.}, Springer,
  Berlin (2011)  267--318.

\bibitem{HoKl12}
D.~Hobson and M.~Klimmek.
\newblock  Robust price bounds for the forward starting straddle.
\newblock {\em Finance and Stochastics}, Volume 19, Issue 1 (2014) 189--214.


\bibitem{HoNe11}
D.~Hobson and A.~Neuberger.
\newblock Robust bounds for forward start options.
\newblock {\em Mathematical Finance}, Volume 22, Issue 1 (2012) 31--56.

\bibitem{Ke84}
H.~Kellerer.
\newblock Duality theorems for marginal problems.
\newblock {\em Z. Wahrsch. Verw. Gebiete.}, 67(4) (1984) 399--432.

\bibitem{KX22}
D.~Kramkov and Y.~Xu.
\newblock An optimal transport problem with backward martingale constraints motivated by insider trading.
\newblock {\em Ann. Appl. Probab.}, 32 (2022), no. 1, 294--326.

\bibitem{KS99}
D.~Kramkov and W.~Schachermayer.
\newblock The asymptotic elasticity of utility functions and optimal investment in incomplete markets.
\newblock {\em Ann. Appl. Probab.}, 9(3): 904--950 (1999)

\bibitem{Lim23}
T.~Lim.
\newblock Geometry of vectorial martingale optimal transportations and duality.
\newblock {\em Mathematical Programming}, https://doi.org/10.1007/s10107-023-01954-4 (2023)

\bibitem{npx23}
A. Neufeld, A. Papapantoleon and Q. Xiang.
\newblock Model-free bounds for multi-asset options using option-implied information and their exact computation.
\newblock {\em Management Science}, 69(4) (2023) 2051--2068.

\bibitem{nst20}
M.~Nutz, F.~Stebegg and X.~Tan.
\newblock Multiperiod martingale transport.
\newblock {\em Stochastic Process. Appl.}, 130 (2020), no. 3, 1568--1615.

\bibitem{Obloj}
J.~Ob{\l}{\'o}j.
\newblock The Skorokhod embedding problem and its offspring.
\newblock {\em  Probability Surveys}, 1 (2004) 321--392.

\bibitem{os17}
J.~Ob{\l}{\'o}j and P. Siorpaes.
\newblock Structure of martingale transports in finite dimensions.
\newblock arXiv preprint. https://arxiv.org/abs/1702.08433

\bibitem{os}
J.~Ob{\l}{\'o}j and P.~Spoida.
\newblock An iterated Az\'{e}ma-Yor type embedding for finitely many marginals.
\newblock {\em Ann. Probab.}, 45(4): 2210--2247 (2017).

\bibitem{p12}
B.~Pass.
\newblock Convexity and multi-dimensional screening for spaces with different dimensions. 
\newblock {\em J. Econom. Theory.} 147 (2012) 2399--2418.

 \bibitem{Sa15}
F. Santambrogio.
\newblock  Optimal transport for applied mathematicians. Calculus of variations, PDEs, and modeling.
\newblock {\em Progress in Nonlinear Differential Equations and their Applications}, 87 Birkhuser/Springer, Cham, (2015)

 \bibitem{Sc17}
W. Schachermayer.
\newblock  Asymptotic Theory of Transaction Costs.
\newblock {\em Zur. Lect. Adv. Math.}, 
European Mathematical Society (EMS), Zürich, 2017.

\bibitem{St65}
V.~Strassen.
\newblock The existence of probability measures with given marginals.
\newblock {\em Ann. Math. Statist.}, 36 (1965) 423-439. 

\bibitem{Vi09}
C.~Villani.
\newblock {\em Optimal Transport. Old and New}, Vol. 338, {\em Grundlehren
  der mathematischen Wissenschaften}.
\newblock Springer (2009).

\bibitem{Z}
D.~Zaev.
\newblock On the Monge-Kantorovich problem with additional linear constraints.
\newblock {\em Mathematical Notes.}, 98(5-6) (2015) 725--741.

\end{thebibliography}
\end{document}